\newtheorem{thm}{Theorem}
\newtheorem{prop}{Proposition}
\newtheorem{cor}{Corollary}
\newtheorem*{prop*}{Proposition}
\newcommand{\manifold}{\mathcal{M}}
\newcommand{\massshell}{\mathcal{H}}
\begin{document}
 \title{Self-similarity breaking of cosmological solutions with collisionless matter}

\author[1]{Ho Lee\footnote{holee@khu.ac.kr}}
\author[2]{Ernesto Nungesser\footnote{ernesto.nungesser@icmat.es}}

\affil[1]{Department of Mathematics and Research Institute for Basic Science, Kyung Hee University, Seoul, 02447, Republic of Korea}
\affil[2]{Instituto de Ciencias Matem\'{a}ticas (CSIC-UAM-UC3M-UCM), 28049 Madrid, Spain}
 
\maketitle
\begin{abstract}
In this paper we consider the Einstein-Vlasov system with Bianchi VII$_0$ symmetry. Under the assumption of small data we show that self-similarity breaking occurs for reflection symmetric solutions. This generalizes the previous work concerning the non-tilted fluid case \cite{WHU} to the Vlasov case, and we obtain detailed information about the late-time behaviour of metric and matter terms.
 \end{abstract}

\section{Introduction}

Due to the observations of the microwave background we know that there is electromagnetic radiation all over the place which is almost isotropic. Only small temperature variations of $10^{-5} K$ are present, which makes them slightly anisotropic. The question is whether this implies that the spacetime has to be almost isotropic as well. If the matter distribution is exactly isotropic, there is a known result, called the EGS-theorem \cite{EGS}, which tells you that the spacetime will also be isotropic if the collision term vanishes, i.e.\ in the Vlasov case or if there is a so-called detailed balance. These results were extended to the Boltzmann case in \cite{TE}. There are also ``Almost'' EGS-theorems, however a critical question is what almost means. We refer for an overview of these results to \cite{NUWL} and Section 4.1 of \cite{Hans}.

On the other hand there is a very interesting result by Wainwright, Hancock and Uggla \cite{WHU} which points in a negative direction at least if there is no cosmological constant present and in a homogeneous setting. There it is shown for the case of a perfect fluid as a matter model that a solution of Bianchi type VII$_0$ can be arbitrary close to a FLRW solution, but still can become very anisotropic as regards the Weyl curvature. Here we show a similar result for collisionless matter with a small momenta dispersion. In our case it is a small data result, where one curvature variable $N_+$ is very big initially, which means that its inverse $M$, is very small. The other curvature variable $N_-$ and the shear variables $\Sigma_+$ and $\Sigma_-$ are small initially as well. What is obtained is an analogue to the result of \cite{WHU} using the methods of \cite{EN}. 

Note that the Weyl curvature becomes unbounded, but there are no upper bounds to the Weyl curvature from observations. The Weyl curvature might be of importance from theoretical considerations concerning the initial singularity or even in a broader cosmological context \cite{CCC}. Another aspect which makes this Bianchi model interesting is that there are oscillations in both $N_-$ and $\Sigma_-$. In order to do the analysis we follow the analysis of \cite{WHU} which consists in separating oscillating and non-oscillating part.

Finally let us note that in the perfect fluid analysis, the dust case is particularly important because a Weyl curvature bifurcation occurs for the value $\gamma=1$ which corresponds to dust. If the value of $\gamma$ is smaller then one, then the Weyl curvature will tend to zero. For dust the Weyl curvature tends to a constant, and if it is bigger than one the Weyl curvature becomes unbounded cf. Theorem 2.4 of \cite{WHU}.

\section{The Einstein-Vlasov system}
In this section we introduce the Einstein-Vlasov system with Bianchi symmetry. Consider a four-dimensional oriented and time oriented Lorentzian manifold $(\manifold, {^4g})$ and a distribution function $f$, then the Einstein-Vlasov system is written as
\begin{align*}
G_{\alpha\beta}&= T_{\alpha \beta},\\
\mathcal{L} f&=0,
\end{align*}
where $G_{\alpha\beta}$ is the Einstein tensor and $T_{\alpha\beta}$ is the energy-momentum tensor defined by
\begin{align*}
T_{\alpha\beta}=  \int_{\massshell} \chi p_{\alpha} p_{\beta}.
\end{align*}
Here, the integration is over the mass-shell $\massshell$ at a given space-time point which is defined by
\[
p_{\alpha} p_{\beta}g^{\alpha\beta}=-1
\]
for massive particles, and $\chi$ is the distribution function multiplied by the Lorentz invariant measure and $\mathcal{L}$ the Liouville operator.

The basic equations we will use can be found in Sections 7.3--7.4 and Chapter 25 of \cite{Hans}. We also refer to this book for an introduction to the Einstein-Vlasov system. Let $\Sigma$ be a spacelike hypersurface in $\manifold$ with $n$ its future directed unit normal. We define the second fundamental form as $k(X,Y)=\langle \nabla_X n, Y\rangle$ for vectors $X$ and $Y$ tangent to $\Sigma$, where $\nabla$ is the Levi-Civita connection of $^4g$. The Hamiltonian and momentum constraint equations are as follows:
\begin{align*}
&R-{k}_{ij} {k}^{ij}+ {k}^2=2 \rho,\\
&\overline{\nabla}^j {k}_{ji}-\overline{\nabla}_i  {k}= -{J}_i,
\end{align*}
where ${g}$ is the induced metric on $\Sigma$, $k=k_{ab}g^{ab}$ the trace of the second fundamental form, $R$ and $\overline{\nabla}$ the scalar curvature and the Levi-Civita connection of ${g}$ respectively, and matter terms are given by $\rho= T_{\alpha\beta}n^{\alpha}n^{\beta}$ and $J_i X^i=- T_{\alpha\beta}n^{\alpha}X^{\beta}$ for $X$ tangent to $\Sigma$. Here and throughout the paper we assume that Greek letters run from $0$ to $3$, while Latin letters from $1$ to $3$, and also follow the sign conventions of \cite{Hans}.

\subsection{The Einstein-Vlasov system with Bianchi symmetry}

A Bianchi spacetime is defined to be a spatially homogeneous spacetime whose isometry group possesses a three-dimensional subgroup that acts simply transitively on spacelike orbits. A Bianchi spacetime admits a Lie algebra of Killing vector fields. These vector fields are tangent to the group orbits, which are the surfaces of homogeneity. Using a left-invariant frame, the metric induced on the spacelike hypersurfaces depends only on the time variable. Let $G$ be the three-dimensional Lie group, $e_i$ a basis of the Lie algebra, and $\xi^i$ the dual of $e_i$. The metric of the Bianchi spacetime in the left-invariant frame is written as
\begin{align*}
^4 g =-dt \otimes dt + g_{ij}\xi^i \otimes \xi^j
\end{align*}
on $\manifold=I \times G$ with $e_0$ future oriented. We will need equations (25.17)--(25.18) of \cite{Hans} (without scalar field) with the notation $T_{ab}=S_{ab}$:
\begin{align}
&\dot{g}_{ab}=2k_{ab},\label{a} \\
&\dot{k}_{ab}=-R_{ab}+2 k^i_a k_{bi} -k\, k_{ab}+S_{ab}+\frac12 (\rho-S) g_{ab}\label{EE},
\end{align}
where the dot means the derivative with respect to time $t$ and $S=g^{ab} S_{ab}$ and $R_{ab}$ is the Ricci tensor associated to the induced 3-metric. Since $k$ does not depend on spatial variables, the constraint equations are as follows: 
\begin{align}
&R-{k}_{ij} {k}^{ij}+ {k}^2=2 \rho,\label{CE1} \\
&\nabla^j {k}_{ji}= -{J}_i,\label{CE2}
\end{align}
where we have dropped the bar on the covariant derivative by a slight abuse of notation.

Below, we collect and derive several useful equations. The mixed version of the second fundamental form is given by
\begin{align}\label{MV}
\dot{k}^a_b= -R^a_b-k\,k^a_b +S^a_b + \frac12(\rho-S) \delta^a_b,
\end{align}
and by taking the trace of (\ref{MV}) we have
\begin{align}\label{im}
\dot k=-R-k^2-\frac12  S +\frac32 \rho.
\end{align}
By the constraint equation \eqref{CE1} one can eliminate the energy density  such that (\ref{im}) reads:
\begin{align}\label{in}
\dot k=-\frac{1}{4}(k^2+R+3k_{ab}k^{ab})-\frac12 S,
\end{align}
and if we substitute for the Ricci scalar with (\ref{CE1}), we obtain
\begin{align}\label{loo}
\dot k=-k_{ab}k^{ab}-\frac12 (S+\rho).
\end{align}
It is convenient to express the second fundamental form as
\begin{align*}
k_{ab}=\sigma_{ab}+H g_{ab},
\end{align*}
where $\sigma_{ab}$ is the trace free part and $H$ is the Hubble parameter:
\begin{align*}
H=\frac{1}{3}k,
\end{align*}
and we obtain
\begin{align*}
k_{ab}k^{ab}=\sigma_{ab}\sigma^{ab}+3H^2.
\end{align*}
By a simple calculation we obtain from (\ref{in})  
\begin{align}\label{H-1}
\partial_t(H^{-1})=\frac{3}{2}+\frac{R}{12H^2}+ \frac{\Sigma_a^b\Sigma^a_b}{4}+\frac{S}{6H^2},
\end{align}
where we have defined
\begin{align*}
\Sigma_a^b=\frac{\sigma_a^b}{H}.
\end{align*}
Moreover, let us introduce the following quantities:
\begin{align*}
\Omega=\frac{\rho}{3H^2}, \quad q=-1-\frac{\dot{H}}{H^2}.
\end{align*}
We can see that the constraint equation (\ref{CE1}) is written as
\begin{align}\label{constraint}
\Omega+ \frac{1}{6}\bigg(\Sigma^i_j\Sigma^j_i-\frac{R}{H^2}\bigg)=1.
\end{align}
Except for Bianchi IX, the scalar curvature is always non-positive (see the expression (E.12) in Appendix E.2 on page 699 of \cite{Hans}) so that
\begin{align*}
\Omega \leq 1.
\end{align*}
We also derive an evolution equation of $\sigma_{ab}$ by combining the above results:
\begin{align*}
\dot{\sigma}_{a}^{b}=\dot{k}^b_a-\dot{H}\delta^{b}_a= -3H \sigma_{a}^b+S_{a}^b-\frac13 S \delta_a^b-R^b_a+\frac13 R\delta^b_a.
\end{align*}
We now define the shear variables:
\begin{align*}
\Sigma_{+}=\frac12(\Sigma_{2}^2+\Sigma_{3}^3)=-\frac12 \Sigma^1_1, \quad  \Sigma_{-}=\frac{1}{2\sqrt{3}}(\Sigma_{2}^2-\Sigma_{3}^3),
\end{align*}
and the dimensionless time variable $\tau$:
\begin{align*}
\frac{dt}{d\tau}=H^{-1},
\end{align*}
and combine (\ref{MV}), (\ref{H-1}), and (\ref{constraint}) to obtain the evolution equations for $\Sigma_-$ and $\Sigma_+$:
\begin{align}
\label{Bla1}&{\Sigma}_+'=(q-2)\Sigma_++\frac{2R-3(R^2_2+R^3_3)}{6H^2}+S_+,\\
\label{Bla2}&{\Sigma}_-'=(q-2)\Sigma_-+\frac{R_3^3-R^2_2}{2\sqrt{3}H^2}+S_-,
\end{align}
where we have used the notation
\begin{align*}
S_+&=\frac{1}{6H^2}(3S^2_2+3S^3_3-2S),\\
S_-&=\frac{S^2_2-S^3_3}{2\sqrt{3}H^2}.
\end{align*}

\subsection{Vlasov equation with Bianchi symmetry}
Since we use a left-invariant frame, $f$ will not depend on $x^a$. Moreover, since $g_{00}=g^{00}=-1$ and $g^{0a}=0$, we have $p^0=-p_0=\sqrt{1+p_ap_bg^{ab}}$, $\rho=T_{00}$, and $J_{a}=-T_{0a}$. The frame components of the energy-momentum tensor are thus
\begin{align*}
&\rho=(\det g)^{-\frac12}  \int f(t,p_*) \sqrt{1+p_ap_bg^{ab}} dp_*,\\
&J_{i}=(\det g)^{-\frac12} \int f(t,p_*) p_i  dp_*,\\
&S_{ij}=(\det g)^{-\frac12} \int f(t,p_*) \frac{p_i p_j} {\sqrt{1+p_ap_bg^{ab}}}dp_*,
\end{align*}
where the distribution function is understood as $f=f(t,p_*)$ with $p_*=(p_1,p_2,p_3)$.

We define ${P}$ as the supremum of the square of momenta at a given time:
\begin{align}\label{PP}
P(t)= \sup\{ g^{ab}p_ap_b \vert f(t,p_*)\neq 0 \}.
\end{align}

\begin{prop}
Consider the Vlasov equation in a Bianchi spacetime. Then, the support in momentum space is bounded as follows: 
\begin{align*}
{P}(\tau) \leq {P} (\tau_0) \exp \Big(2\int_{\tau_0}^\tau (-1+(\Sigma^a_b \Sigma^b_a)^{\frac12}) ds\Big).
\end{align*}
\end{prop}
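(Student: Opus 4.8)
The plan is to track the single scalar $w(\tau)=g^{ab}p_ap_b$ along the characteristics of the Vlasov equation and to convert the supremum defining $P$ into a Gr\"onwall estimate. First I would use that $\mathcal{L}f=0$ forces $f$ to be constant along characteristics, together with the reduction of the Vlasov equation in the left-invariant frame recorded in \cite{Hans}, by which the covariant momentum components $p_a$ are preserved along these characteristics. Consequently the momentum support $\{p_*:f(t,p_*)\neq0\}$ is a fixed set, independent of $t$, and it suffices to estimate $w$ for each fixed $p_*$ in this set. (Even if one instead follows the full spacetime geodesics, along which $p_a$ need not be constant, the resulting structure-constant contributions are antisymmetric in the index pair contracted against the symmetric tensor $p_ap_b$, and therefore drop out of $\dot w$; so the estimate below is unaffected.)

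Next I would derive the evolution of $w$. Differentiating $g^{ac}g_{cb}=\delta^a_b$ and using \eqref{a} gives $\dot g^{ab}=-2k^{ab}$, so with $p_a$ fixed one has $\dot w=-2k^{ab}p_ap_b$. Writing $k^{ab}=\sigma^{ab}+Hg^{ab}$, passing to the time $\tau$ via $dt/d\tau=H^{-1}$, and using $\Sigma^{ab}=\sigma^{ab}/H$, I obtain
\begin{align*}
\frac{dw}{d\tau}=-2w-2\Sigma^{ab}p_ap_b.
\end{align*}

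The key step is the pointwise bound $|\Sigma^{ab}p_ap_b|\le (\Sigma^a_b\Sigma^b_a)^{1/2}\,w$. Since $\Sigma^a_b$ is trace-free and symmetric with respect to $g$, it has real eigenvalues $\lambda_1,\lambda_2,\lambda_3$ and can be diagonalized in a $g$-orthonormal eigenbasis; writing $v^i$ for the corresponding components of the momentum, one has $\Sigma^{ab}p_ap_b=\sum_i\lambda_i(v^i)^2$ and $w=\sum_i(v^i)^2$, whence $|\Sigma^{ab}p_ap_b|\le(\max_i|\lambda_i|)\,w\le(\sum_i\lambda_i^2)^{1/2}w=(\Sigma^a_b\Sigma^b_a)^{1/2}w$. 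In other words the spectral radius of $\Sigma^a_b$ is controlled by its Hilbert--Schmidt norm. Inserting this into the evolution equation yields the differential inequality $\frac{dw}{d\tau}\le 2\big(-1+(\Sigma^a_b\Sigma^b_a)^{1/2}\big)w$.

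Finally I would integrate this inequality to obtain, along each characteristic, $w(\tau)\le w(\tau_0)\exp\big(2\int_{\tau_0}^\tau(-1+(\Sigma^a_b\Sigma^b_a)^{1/2})\,ds\big)$. Because the exponential factor depends only on the background quantities and not on $p_*$, and $w(\tau_0)\le P(\tau_0)$ for every $p_*$ in the support, taking the supremum over the fixed support gives the claimed bound on $P(\tau)$. I expect the main subtlety to be the first step, namely the correct reduction of the characteristic flow so that the supremum is genuinely taken over a time-independent set, together with pinning down the constant in the spectral bound; once these are in place the Gr\"onwall argument is routine.
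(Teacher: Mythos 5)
Your argument, in substance, coincides with the paper's proof: both rest on the evolution of $w=g^{ab}p_ap_b$ along characteristics, the cancellation of the structure-constant terms by antisymmetry, the pointwise bound of $\Sigma^{ab}p_ap_b$ by $(\Sigma^a_b\Sigma^b_a)^{1/2}w$ (the paper states this as the quadratic-form inequality $\sigma^{ab}\leq(\sigma^c_d\sigma^d_c)^{1/2}g^{ab}$, which your eigenvalue/Hilbert--Schmidt argument proves), and a Gr\"onwall integration. However, your \emph{primary} framing contains a genuine error, and it is only your parenthetical fallback that is valid. In a Bianchi spacetime the characteristics of the Vlasov equation in the left-invariant frame are the geodesics, and the covariant components satisfy \eqref{charak}, i.e.\ $dV_a/dt=(V^0)^{-1}C^d_{ba}V^bV_d$; this vanishes identically only when the structure constants do, i.e.\ for Bianchi I. For the Bianchi VII$_0$ case of this paper, with $C^2_{31}=C^3_{12}=1$ as in \eqref{sc7}, one has for instance $\dot V_1=(V^0)^{-1}(V^3V_2-V^2V_3)$, which is generically nonzero, so the $p_a$ are \emph{not} preserved along characteristics and the momentum support is \emph{not} a fixed set. (The conserved momenta come from the Killing vectors, which are the right-invariant fields, not the left-invariant frame used here.)

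Fortunately, the remark you placed in parentheses is precisely the paper's argument: since $C^d_{ba}$ is antisymmetric in its lower indices and gets contracted against the symmetric combination $V^aV^b$, the characteristic contribution to $\dot w$ vanishes, giving $\dot{\tilde P}=\dot g^{ab}V_aV_b$, which is exactly \eqref{ha}; everything downstream is unaffected. The only repair needed in your final step is to drop the appeal to a time-independent support: given any $p_*$ with $f(\tau,p_*)\neq 0$, follow its characteristic backwards to $\tau_0$, where it lies in the support of $f(\tau_0,\cdot)$, so that $w(\tau_0)\leq P(\tau_0)$; the Gr\"onwall bound along that characteristic, followed by a supremum over the support at time $\tau$, then yields the stated estimate. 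With this rewording, your proof is the paper's proof.
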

\begin{proof}
The Vlasov equation is written as
\begin{align}\label{ve}
p^0\frac{\partial f}{\partial t}+C^d_{ba}p^{b}p_{d}\frac{\partial f}{\partial p_a}=0,
\end{align}
where $C^d_{ba}$ are the structure constants of the Lie algebra (see \cite{LN} for details). A characteristic curve is defined for each $V_a(t)=p_a$ by
\begin{align}\label{charak}
\frac{dV_a}{dt}=(V^0)^{-1}C^d_{ba}V^bV_{d}.
\end{align}
For the rest of the paper the capital $V_a$ will indicate that $p_a$ is parametrised by the coordinate time $t$ (or $\tau$ if we express all the variables in these terms). Note that if we define
\begin{align}\label{VVV}
 \tilde{P}=g^{ab}V_aV_b,
\end{align}
due to the antisymmetry of the structure constants we have from (\ref{charak})
\begin{align}\label{ha}
\dot{\tilde{P}}=\dot{g}^{ab}V_aV_b.
\end{align}
In the sense of quadratic forms we have
\begin{align*}
\sigma^{ab} \leq (\sigma^c_d \sigma^d_c)^{\frac12} g^{ab},
\end{align*}
and as a consequence
\begin{align}\label{decayk}
\dot{g}^{ab} \leq 2H (-1+(\Sigma^a_b \Sigma^b_a)^{\frac12}) g^{ab}.
\end{align}
We combine (\ref{ha}) and (\ref{decayk}) with the dimensionless time variable $\tau$ to obtain
\begin{align*}
\tilde{P}'\leq 2 (-1+(\Sigma^a_b \Sigma^b_a)^{\frac12})\tilde{P}.
\end{align*}
Integrating this,
\begin{align}\label{PPP}
\tilde{P}(\tau) \leq \tilde{P} (\tau_0) \exp \Big(2\int_{\tau_0}^\tau (-1+(\Sigma^a_b \Sigma^b_a)^{\frac12}) ds\Big),
\end{align}
and taking the supremum, we obtain the desired result.
\end{proof}
As a consequence we have a nice result in the case of massive particles in a Bianchi spacetime which is not Bianchi IX since the quotient of the trace of the energy-momentum tensor and the energy density is in fact bounded by the support of the momenta.
\begin{cor}
Consider massive solutions to the Einstein-Vlasov system in a Bianchi spacetime which is not Bianchi IX. If the shear is bounded like
$(\Sigma^a_b \Sigma^b_a)^{\frac12} < 1$ for all times, we have a dust-like behaviour in the sense that
\begin{align*}
\frac{S}{\rho} \rightarrow 0.
\end{align*}
\end{cor}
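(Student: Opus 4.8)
The plan is to bound the ratio $S/\rho$ by an explicit increasing function of the momentum support $P$ and then to use the Proposition to drive $P$ to zero. First I would write both matter quantities as momentum integrals against the same nonnegative weight $f$. Setting $u := g^{ab}p_ap_b$, the expressions for the frame components give
\begin{align*}
\rho = (\det g)^{-\frac12}\int f\sqrt{1+u}\,dp_*, \qquad S = (\det g)^{-\frac12}\int f\,\frac{u}{\sqrt{1+u}}\,dp_*,
\end{align*}
where the second identity is the contraction of $S_{ij}$ with $g^{ij}$. The key elementary observation is that the two integrands differ by the factor $u/(1+u)$, which is increasing in $u$ and vanishes at $u=0$.

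Next I would exploit the definition (\ref{PP}) of $P(t)$: on the support of $f$ one has $0\le u\le P(t)$, so pointwise
\begin{align*}
\frac{u}{\sqrt{1+u}} = \frac{u}{1+u}\,\sqrt{1+u} \le \frac{P}{1+P}\,\sqrt{1+u}.
\end{align*}
Integrating this inequality against $f\ge 0$ and comparing with the expression for $\rho$ yields the clean bound
\begin{align*}
0 \le \frac{S}{\rho} \le \frac{P}{1+P},
\end{align*}
valid for any nontrivial $f$ (so that $\rho>0$). Since $s\mapsto s/(1+s)$ is continuous and vanishes at $s=0$, it now suffices to show $P\to 0$ along the evolution.

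For the final step I would invoke the Proposition. Under the hypothesis that $(\Sigma^a_b\Sigma^b_a)^{\frac12}$ stays uniformly below $1$, say $(\Sigma^a_b\Sigma^b_a)^{\frac12}\le c<1$ for all $\tau\ge\tau_0$, the integrand of the exponential bound satisfies $-1+(\Sigma^a_b\Sigma^b_a)^{\frac12}\le -(1-c)<0$, so the exponent is bounded above by $-2(1-c)(\tau-\tau_0)$, which tends to $-\infty$ as $\tau\to\infty$. Hence $P(\tau)\to 0$, and together with the ratio bound this gives $S/\rho\to 0$, the asserted dust-like behaviour.

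I expect the main subtlety to be the interpretation and use of the shear hypothesis, rather than any of the estimates. The pointwise statement $(\Sigma^a_b\Sigma^b_a)^{\frac12}<1$ at each instant is by itself too weak: if this quantity approached $1$ sufficiently fast, the integral in the Proposition could remain bounded and $P$ need not decay. What the argument really needs is a \emph{uniform} gap $c<1$, together with the fact that the dimensionless time ranges over $[\tau_0,\infty)$ — the relevant regime for these ever-expanding massive non-Bianchi IX models. Once this uniform strict bound is granted, the remaining steps are routine.
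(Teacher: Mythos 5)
Your proposal is correct and takes essentially the same route as the paper: the paper likewise reduces everything to a pointwise momentum-space bound, $S\le P\rho$ (derived in an orthonormal frame, versus your marginally sharper $S/\rho\le P/(1+P)$), and then invokes the Proposition to send $P\to 0$. Your closing remark is also apt: the paper's proof silently treats the hypothesis $(\Sigma^a_b\Sigma^b_a)^{\frac12}<1$ as a uniform bound when asserting that the exponential factor tends to zero, which is exactly the subtlety you flag and repair with the uniform constant $c<1$.
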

\begin{proof}
Denoting by a hat that we use an orthonormal frame we have
\begin{align*}
 S=S_{ab}g^{ab}=\hat{S}_{11}+\hat{S}_{22}+\hat{S}_{33},
 \end{align*} 
 so that the following holds for the trace 
\begin{align*}
S&=\int f(t,\hat{p})\frac{|\hat{p}|^2}{p^0}d\hat{p}\leq P\int f(t,\hat{p})|\hat{p}|d\hat{p}\leq P\rho,
\end{align*}
where we used $|\hat{p}|^2=\hat{p}_1^2+\hat{p}_2^2+\hat{p}_3^2=g^{ab}p_ap_b$ and $p^0\geq 1$. Applying the previous result we obtain
\[
\frac{S}{\rho}\leq P (\tau_0) \exp \Big(2\int_{\tau_0}^\tau (-1+(\Sigma^a_b \Sigma^b_a)^{\frac12}) ds\Big),
\]
which tends to zero as $\tau\to\infty$.
\end{proof}
For later use we remark that the quantity $S/(3H^2)$ has the same bound with $S/\rho$ as follows:
\begin{align}\label{SS}
\frac{S}{3H^2}  \leq \frac{\rho}{3H^2} P= \Omega P \leq P.
\end{align}
Thus we have that
\begin{align}\label{SSS}
\frac{S}{3H^2}  \leq P (\tau_0) \exp \Big(2\int_{\tau_0}^\tau (-1+(\Sigma^a_b \Sigma^b_a)^{\frac12}) ds\Big).
\end{align}

\subsection{Reflection and Bianchi VII$_0$ symmetries}
We will assume an additional symmetry namely the reflection symmetry:
\begin{align*}
 f(t,p_1,p_2,p_3)=f(t,-p_1,-p_2,p_3)=f(t,p_1,-p_2,-p_3).
\end{align*}
If we suppose that the distribution function is initially reflection symmetric and the metric and the second fundamental form are initially diagonal, then we see that the energy-momentum tensor is diagonal as well. One can see from the equations that the metric and the second fundamental form will remain diagonal. This symmetry implies in particular that there is no matter current. In the diagonal case there is a simple formula for the Ricci tensor. Let $(ijk)$ denote a cyclic permutation of $(123)$ and let us suspend the Einstein summation convention for the next three formulas. Introduce $\nu_i$ as the signs depending on the Bianchi type (see Table 1 of \cite{CH}), and define 
\begin{align*}
n_i=\nu_i \sqrt{\frac{g_{ii}}{g_{jj}g_{kk}}},
\end{align*}
which implies $n_2= \nu_2 \sqrt{g_{22}/(g_{33}g_{11})}$, etc, and the Ricci tensor (cf.\ (11a) of \cite{CH}) is given by
\begin{align*}
R^i_i=\frac{1}{2}\Big(n_i^2 -(n_j-n_k)^2\Big).
\end{align*}
We also have in the diagonal case that
\begin{align*}
\Sigma_+^2+\Sigma_-^2= \frac{1}{6}\Big((\Sigma^1_1)^2+(\Sigma_{2}^2)^2+(\Sigma_{3}^3)^2\Big)= \frac16\Sigma^a_b\Sigma^b_a.
\end{align*}

We will study the Bianchi VII$_0$ case. For Bianchi VII$_0$, since $\nu_1=0$ and $\nu_2=\nu_3=1$, we have that $n_{2}$ and $n_{3}$ are positive definite and the only non-vanishing structure constants are (cf.\ Appendix E, page 695 of \cite{Hans}):
\begin{align}\label{sc7}
C^2_{31}=1=-C^2_{13},\quad C^3_{12}=1=-C^3_{21}.
\end{align}
The curvature expressions are
\begin{align*}
& R_{1}^1=R=-\frac{1}{2} (n_2-n_3)^2,\\
&R^2_2=-R^3_3=\frac{1}{2}(n_2^2-n_3^2).
\end{align*}
Let $N_{ii}$ denote $n_i/H$, then straightforward computations lead to
\begin{align*}
N_{22}' = N_{22} (q+2\Sigma_++2\sqrt{3}\Sigma_-),\\
N_{33}' = N_{33} (q+2\Sigma_+-2\sqrt{3}\Sigma_-),
\end{align*}
(see \cite{E3} for more details in the Bianchi II case). In order to compare this with the equations of \cite{WHU} we use the following quantities
\begin{align*}
&N_+=\frac{N_{22}+N_{33}}{2}>0,\\
&N_-=\frac{N_{22}-N_{33}}{2\sqrt{3}}.
\end{align*}
The evolution equations for the shear and curvature variables using the dimensionless time variable are as follows:
\begin{align*}
&\Sigma_+'=(q-2)\Sigma_+-2N_-^2+S_+,\\
&\Sigma_-'=(q-2)\Sigma_--2N_+N_-+S_-,\\
&N_+'=(q+2\Sigma_+)N_++6\Sigma_-N_-,\\
&N_-'=(q+2\Sigma_+)N_-+2\Sigma_-N_+,
\end{align*}
which one can compare with (3.10) of \cite{WHU}. Since we expect a similar behaviour as in the fluid case, we introduce as (3.15)--(3.16) in \cite{WHU} the following variables (we use $X$ instead of $R$ to avoid confusion with the scalar curvature):
\begin{align*}
&M=\frac{1}{N_+}>0,\\
&N_-=X \sin \psi,\quad X>0,\\
&\Sigma_-=X \cos \psi, \quad X>0.
\end{align*}
We also split $q$ in what we expect to be the non-oscillatory part $Q$ and the oscillatory part
\begin{align*}
q&=-1-\frac{\dot{H}}{H^2}=\frac12+\frac{R}{12H^2}+\frac{\Sigma^i_j \Sigma^j_i}{4} + \frac{S}{6H^2}\\
&=\frac12-\frac12 N_-^2+\frac32 \Sigma_+^2 +\frac32 \Sigma_-^2+\frac{S}{6H^2}\\
&=\frac12 +\frac32 \Sigma_+^2 +\frac12 X^2+X^2 \cos 2\psi+\frac{S}{6H^2}=Q+X^2 \cos 2\psi,
\end{align*}
where 
\begin{align*}
Q=\frac12 +\frac32 \Sigma_+^2 +\frac12 X^2 +\frac{S}{6H^2}.
\end{align*}
The differential equations for $\Sigma_+$, $\Sigma_-$, $N_+$, and $N_-$ are now written as follows:
\begin{align}
&\label{+} \Sigma_+'=\Sigma_+(Q-2)-X^2+(1+\Sigma_+)X^2\cos 2\psi +S_+,\\
&\label{M} {M}'=-M[Q+2\Sigma_++X^2(\cos 2\psi+3M\sin 2\psi)],\\
&\label{X} X'=[Q+\Sigma_+-1+(X^2-1-\Sigma_+) \cos 2\psi ]X+\cos \psi S_- ,\\
&\label{psi} \psi'=\frac{1}{M}\Big(2+M[(1+\Sigma_+)\sin 2\psi - X^{-1}S_-\sin \psi]\Big),
\end{align}
which can be compared to (3.17)--(3-20) of \cite{WHU}. An important quantity in the analysis of \cite{WHU} is what they call the Weyl parameter. It is a dimensionless measure of the Weyl curvature tensor. We have defined shear and curvature variables in such a way that we have the same expressions as in \cite{WHU}. The square of it is defined as (cf.\ (3.37)--(3.38) of \cite{WHU})
\begin{align*}
\mathcal{W}^2= \mathcal{E_+}^2+  \mathcal{E_-}^2+ \mathcal{E_-}^2+\mathcal{H_-}^2,
\end{align*}
where 
\begin{align*}
& \mathcal{E_+}=\Sigma_+(1+\Sigma_+)+\frac12X^2(1-3\cos 2\psi),\\
&\mathcal{E_-}=\frac{2X}{M}\Big(\sin \psi +\frac12 M(1-2\Sigma_+)\cos \psi\Big),\\
&\mathcal{H_+}=-\frac32 X^2 \sin 2\psi,\\
&\mathcal{H_-}=\frac{2X}{M}\Big(-\cos \psi - \frac32 M \Sigma_+ \sin \psi\Big).
\end{align*}
For details we refer to \cite{WHU} and references therein.
\section{The bootstrap argument}
\subsection{Bootstrap assumptions}
We will use a bootstrap argument. Let us assume that there exists an interval $[\tau_0,\tau_1)$ where the following estimates hold:
\begin{align*}
M(\tau)  &\leq \varepsilon e^{-\frac25 (\tau-\tau_0)},\\
X(\tau) &\leq \varepsilon e^{-\frac25 (\tau-\tau_0)},\\
\vert \Sigma_+(\tau)\vert& \leq \varepsilon e^{-\frac35 (\tau-\tau_0)},
\end{align*}
where the epsilon is a small positive constant. 
\subsection{Matter terms}
Due to the bootstrap assumptions the shear is bounded. Since $\Sigma^a_b \Sigma^b_a = 6 (\Sigma^2_+ +\Sigma_-^2)$, we have
\begin{align*}
\Sigma^a_b \Sigma^b_a \leq 6(\varepsilon^2 + X^2\cos^2 \psi) \leq C\varepsilon^2.
\end{align*}
As a consequence using \eqref{SSS} we have
\begin{align*}
\frac{S}{3H^2} \leq P (\tau_0) \exp \Big(2 \int_{\tau_0}^\tau (-1+C\varepsilon)ds\Big)\leq P(\tau_0)e^{(-2+C\varepsilon)(\tau-\tau_0)}.
\end{align*}
The variables $S_+$ and $S_-$ are bounded by that up to an irrelevant constant. Note that
\begin{align*}
S_+=\frac{1}{6H^2}(S-3S^1_1),\quad S_-=\frac{1}{2\sqrt{3}H^2}(S^2_2-S_3^3),
\end{align*}
hence we have
\begin{align*}
(S_+)^2+(S_-)^2&=\frac{1}{36H^4}(S^2-6SS^1_1+9(S^1_1)^2)+\frac{1}{12H^4}((S_2^2)^2+(S^3_3)^2-2S^2_2S^3_3)\\
&=\frac{1}{36H^4}(S^2+3((S^1_1)^2+(S_2^2)^2+(S^3_3)^2)-6SS^1_1+6(S_1^1)^2-6S^2_2S^3_3)\\
&=\frac{1}{36H^4}(S^2+6((S^1_1)^2+(S_2^2)^2+(S^3_3)^2)-3(S^1_1+S^2_2+S^3_3)^2)\\
&=\frac{1}{6H^4}S^a_bS^b_a-\frac{1}{18H^4}S^2.
\end{align*}
Since $S^a_bS^b_a\leq S^2$, we have $(S_+)^2+(S_-)^2 \leq S^2/(9H^4)$, and conclude that $S_+$ and $S_-$ are bounded as follows:
\[
|S_\pm|\leq P(\tau_0)e^{(-2+C\varepsilon)(\tau-\tau_0)}.
\]

\subsection{Estimate of $M$}
In order to obtain a better estimate for $M$ it will be necessary to estimate the term in brackets of \eqref{M} which can be expressed using the definition of $Q$ as follows:
\begin{align}\label{uplow}
-\frac12-\frac{d}{d\tau}\log M= 2\Sigma_+ +\frac32 \Sigma_+^2+X^2\Big(\frac12 +\cos 2\psi+3M\sin 2\psi\Big)+\frac{S}{6H^2}.
\end{align}
Using the bootstrap assumptions and the bound for the matter term we obtain for the evolution equation of $M$
\begin{align*}
\Big|\frac12+\frac{d}{d\tau}\log M\Big|\leq C(\varepsilon+P(\tau_0)),
\end{align*}
which implies that
\begin{align}\label{estimatem}
M(\tau_0)  e^{(-\frac12-C(\varepsilon+P(\tau_0)))(\tau-\tau_0)}  \leq  M(\tau)  \leq  M(\tau_0)  e^{(-\frac12+C(\varepsilon+P(\tau_0)))(\tau-\tau_0)}.
\end{align}

\subsection{Estimate of $X$}
The evolution equations for $M$, $X$, and $\psi$ are written as
\begin{align*}
&M'=-dM ,\\
&X'=(a+b \cos 2 \psi) X +c,\\
&\psi'= 2M^{-1}  + e,
\end{align*}
where $a$, $b$, $c$, $d$, and $e$ are functions of $\tau$ given by
\begin{align}
&a=Q+\Sigma_+-1,\\
&b=X^2-1-\Sigma_+,\\
&c=S_- \cos \psi,\\
&d=Q+2\Sigma_++X^2(\cos 2\psi+3M\sin 2\psi),\\
&e=(1+\Sigma_+)\sin 2\psi - X^{-1}S_-\sin \psi.
\end{align}
An oscillation term appears in the evolution equation of $X$ with the factor $b$ which is not small. In order to obtain an estimate of $X$ we have to get rid of the $b$-factor. We introduce the variable
\begin{align*}
\bar{X}=f X\quad\mbox{with}\quad f=\frac{1}{1+\frac14 Mb \sin 2\psi}.
\end{align*}
Note that $f>0$ as long as $M$, $X$, and $\Sigma_+$ are small. By direct calculations we have
\begin{align*}
f'=f^2 M\bigg( \frac{bd-b'}{4}\sin 2\psi -\frac{be}{2}  \cos 2\psi\bigg)-f^2 b \cos 2\psi,
\end{align*}
and the evolution equation of $\bar{X}$ is given by
\begin{align}\label{barx}
\bar{X}'=\bigg( a+M\bigg(\frac{ab+bd-b'}{4}\sin 2\psi-\frac{be}{2}\cos2\psi+\frac{b^2}{8}\sin 4\psi\bigg)\bigg) f \bar{X}+ cf.
\end{align}
We need the following estimates. By the definition of $Q$ we have
\begin{align*}
a=\frac12+\frac32\Sigma_+^2+\frac12 X^2+\frac{S}{6H^2}+\Sigma_+-1,
\end{align*}
which implies that $|a+\frac12|\leq C(\varepsilon+P(\tau_0))$ by the bootstrap assumption. Similarly we have
\[
|b+1|\leq C\varepsilon,\quad \bigg|d-\frac12\bigg|\leq C(\varepsilon+P(\tau_0)).
\]
For $b'$ we write
\begin{align*}
b'&=2cX+X^2(2a+2b\cos 2\psi -1 +\cos 2\psi)\\
&\quad +\frac32 \Sigma_+-\Sigma_+\bigg(-\frac32+\frac32\Sigma_+^2+\frac12 X^2-X^2\cos 2\psi +\frac{S}{6H^2}\bigg) +S_+,
\end{align*}
and this shows that $|b'|\leq C(\varepsilon+P(\tau_0))$ by the bootstrap assumptions and the estimates above. Consequently, we obtain 
\begin{align*}
\bar{X}'&=\bigg( a+M\bigg(\frac{ab+bd-b'}{4}\sin 2\psi-\frac{be}{2}\cos2\psi+\frac{b^2}{8}\sin 4\psi\bigg)\bigg) f \bar{X}+ cf\\
&=\bigg( a+M\bigg(\frac{ab+bd-b'}{4}\sin 2\psi-\frac{b(1+\Sigma_+)\sin 2\psi}{2}\cos2\psi+\frac{b^2}{8}\sin 4\psi\bigg)\bigg) f \bar{X}\\
&\quad +\frac{Mb}{2}S_-\sin\psi\cos 2\psi f^2 +cf.
\end{align*}
Applying the estimates of $a$, $b$, $d$, and $b'$ together with the bootstrap assumptions and the estimates of matter terms, we obtain
\begin{align*}
\bar{X}'&\leq \bigg(-\frac12 +C(\varepsilon +P(\tau_0))\bigg)f\bar{X}+(C\varepsilon f^2 +f)S_-\\
&\leq \bigg(-\frac12 +C(\varepsilon +P(\tau_0))\bigg)\bar{X}+(1+C\varepsilon )P(\tau_0)e^{(-2+C\varepsilon)(\tau-\tau_0)},
\end{align*}
where we used $f= 1+O(\varepsilon)$ by the bootstrap assumption, and this can be written as
\begin{align*}
\frac{d}{d\tau}\Big[e^{\frac12(\tau-\tau_0)}\bar{X}\Big] &\leq C(\varepsilon +P(\tau_0))e^{\frac12(\tau-\tau_0)}\bar{X}+(1+C\varepsilon )P(\tau_0)e^{(-\frac32+C\varepsilon)(\tau-\tau_0)}.
\end{align*}
Integrating the above inequality we obtain
\begin{align*}
e^{\frac12(\tau-\tau_0)}\bar{X}(\tau)&\leq \bar{X}(\tau_0)+C(\varepsilon +P(\tau_0))\int_{\tau_0}^\tau e^{\frac12(s-\tau_0)}\bar{X}(s)ds\\
&\quad+(1+C\varepsilon )P(\tau_0)\int_{\tau_0}^\tau e^{(-\frac32+C\varepsilon)(s-\tau_0)}ds\\
&\leq \bar{X}(\tau_0)+C(\varepsilon +P(\tau_0))\int_{\tau_0}^\tau e^{\frac12(s-\tau_0)}\bar{X}(s)ds
+\frac{(1+C\varepsilon)P(\tau_0)}{\frac32-C\varepsilon},
\end{align*}
which can be written for small $\varepsilon$ as follows:
\[
e^{\frac12(\tau-\tau_0)}\bar{X}(\tau)\leq \bar{X}(\tau_0)+P(\tau_0)+C(\varepsilon +P(\tau_0))\int_{\tau_0}^\tau e^{\frac12(s-\tau_0)}\bar{X}(s)ds.
\]
By Gronwall's inequality we obtain 
\[
e^{\frac12(\tau-\tau_0)}\bar{X}(\tau)\leq (\bar{X}(\tau_0)+P(\tau_0))e^{C(\varepsilon+P(\tau_0))(\tau-\tau_0)},
\]
which shows that
\[
\bar{X}(\tau)\leq (\bar{X}(\tau_0)+P(\tau_0))e^{(-\frac12+C(\varepsilon+P(\tau_0)))(\tau-\tau_0)}.
\]
Since $\bar{X}=fX$ with $f= 1+O(\varepsilon)$, we conclude that
\begin{align}\label{boot}
X(\tau)\leq (1+C\varepsilon)\Big(X(\tau_0)+P(\tau_0)\Big)e^{(-\frac12+C(\varepsilon+P(\tau_0)))(\tau-\tau_0)}.
\end{align}

\subsection{Closing the bootstrap argument and its result}

Our goal is to obtain an improved decay rate
\begin{align}
\Sigma_+\leq \varepsilon e^{(-1+\delta) (\tau-\tau_0)} \label{boot_Sigma_+}
\end{align}
on the interval $[\tau_0,\tau_1)$. If this is the case, there is nothing more to do. Let us suppose the opposite, namely that
for any $\delta>0$, there exist $\tau_-$ and $\tau_+$ such that $\tau_0\leq \tau_-<\tau_+\leq \tau_1$ and
\begin{align}
\vert \Sigma_+ \vert &\leq \varepsilon e^{(-1+\delta) (\tau-\tau_0)},\quad t\in [t_0,t_-],\nonumber\\
\vert \Sigma_+ \vert &\geq \varepsilon e^{(-1+\delta) (\tau-\tau_0)},\quad t\in [t_-,t_+).\label{op}
\end{align}
We consider the second time interval $[t_-,t_+)$.
From the evolution equation of $\Sigma_+$ we have
\begin{align*}
\frac{\Sigma_+'}{\Sigma_+}=-\frac32 +\frac{S}{6H^2}+X^2(\cos2\psi -1)\Sigma_+^{-1}+ X^2(\cos 2\psi+\frac12) +S_+\Sigma_+^{-1},
\end{align*}
where we assume that $\Sigma_+\neq 0$. From this we obtain
\begin{align*}
\frac{\Sigma_+'}{\Sigma_+}\leq -\frac32 +\frac{S}{6H^2}+CX^2 \vert \Sigma_+^{-1}\vert + C X^2 +\vert S_+\Sigma_+^{-1}\vert.
\end{align*}
Note that this inequality is invariant if we make the transformation $\Sigma_+ \rightarrow -\Sigma_+$. Using the estimates obtained and \eqref{op} we obtain
\begin{align*}
\frac{\Sigma_+'}{\Sigma_+} &\leq -\frac32 +  C(1+C\varepsilon)^2 \left[\Big(X(\tau_0)+P(\tau_0)\Big)^2+P(\tau_0)\right]e^{(-1+C(\varepsilon+P(\tau_0)))(\tau-\tau_0)} \varepsilon^{-1} e^{(1-\delta) (\tau-\tau_0)} \\
&\leq -1 +  C\varepsilon^{-1} (1+C\varepsilon)^2 \left[\Big(X(\tau_0)+P(\tau_0)\Big)^2+P(\tau_0)\right]e^{(-\delta+C(\varepsilon+P(\tau_0)))(\tau-\tau_0)}.
\end{align*}
Note that $X(\tau_0)$ and $P(\tau_0)$ can be chosen independently from $\varepsilon$ and can be made even smaller if necessary.
Integrating our last inequality we obtain a contradiction to \eqref{op}. Thus we have shown that there exists a small number $\delta>0$ such that $\Sigma_+$ satisfies \eqref{boot_Sigma_+}. We now combine the estimates \eqref{estimatem}, \eqref{boot}, and \eqref{boot_Sigma_+} to conclude that the estimates obtained hold globally in time provided that $M(\tau_0)$, $X(\tau_0)$, $\Sigma_+(\tau_0)$, and $P(\tau_0)$ are sufficiently small. We obtain the following results.

\begin{prop}
Consider any $C^{\infty}$ solution of the Einstein-Vlasov system with  with reflection and Bianchi VII$_0$ symmetries and with $C^{\infty}$ initial data. Assume that $P(\tau_0)$, ${\Sigma}_+(\tau_0)$, $X(\tau_0)$ and $M(\tau_0) $ are sufficiently small. Then at late times there exists a small constant  $\epsilon>0$ such the following estimates hold:
\begin{align*}
\Sigma_+&=O(\epsilon e^{(-1+\epsilon)\tau}),\\
X&=O(\epsilon e^{(-\frac12+\epsilon)\tau}),\\
M&=O(\epsilon e^{(-\frac12+\epsilon)\tau}),\\
\frac{S}{\rho}&=O(\epsilon e^{(-2+\epsilon)\tau}).
\end{align*}
\end{prop}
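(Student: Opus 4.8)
The plan is to close a continuity (bootstrap) argument on the maximal interval $[\tau_0,\tau_1)$ on which the three bootstrap assumptions hold, feeding in the quantitative bounds \eqref{estimatem}, \eqref{boot} and \eqref{boot_Sigma_+} already derived above. First I would fix the free smallness parameters in a definite order: choose the data so that $M(\tau_0),X(\tau_0),\Sigma_+(\tau_0),P(\tau_0)\le\varepsilon/4$, and then shrink $\varepsilon$ (hence all of $C(\varepsilon+P(\tau_0))$ and $\delta$) so that $C(\varepsilon+P(\tau_0))<\tfrac{1}{10}$ and $\delta<\tfrac15$. With these choices the decay rates $-\tfrac12+C(\varepsilon+P(\tau_0))$ governing $M$ and $X$ are strictly faster than the assumed rate $-\tfrac25$, and the rate $-1+\delta$ governing $\Sigma_+$ is strictly faster than the assumed $-\tfrac35$.

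The next step is to check that each derived bound strictly improves the corresponding bootstrap assumption. For $M$, inequality \eqref{estimatem} with $M(\tau_0)\le\varepsilon/4$ gives $M(\tau)\le\tfrac{\varepsilon}{4}e^{(-\frac12+C(\varepsilon+P(\tau_0)))(\tau-\tau_0)}<\varepsilon e^{-\frac25(\tau-\tau_0)}$, and the same reasoning applied to \eqref{boot} with $X(\tau_0)+P(\tau_0)$ small yields the strict improvement for $X$. The improvement for $\Sigma_+$ is exactly the statement \eqref{boot_Sigma_+}, established above by ruling out the alternative \eqref{op}. Thus on $[\tau_0,\tau_1)$ all three bootstrap inequalities in fact hold strictly.

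I would then run the continuity argument. Let $\tau_1$ be the supremum of times for which the bootstrap assumptions hold; smallness of the data and continuity of the flow give $\tau_1>\tau_0$. The set of such times is closed, since the assumptions are non-strict inequalities stable under limits, and open, since the strict improvements of the previous step persist on a neighbourhood of each admissible time by continuity. As $[\tau_0,\infty)$ is connected this forces $\tau_1=\infty$, so every estimate is global. Reading the global bounds off then produces the statement: \eqref{boot_Sigma_+} gives $\Sigma_+=O(\varepsilon e^{(-1+\delta)(\tau-\tau_0)})$, while \eqref{boot} and \eqref{estimatem} give the common rate $-\tfrac12+C(\varepsilon+P(\tau_0))$ for $X$ and $M$; absorbing the fixed factors $e^{(1-\delta)\tau_0}$ and the like into the implicit constants and relabelling the various small quantities $\varepsilon,\delta,C(\varepsilon+P(\tau_0))$ by a single $\epsilon$ gives the first three displays. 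The bound on $S/\rho$ follows from the computation in the Corollary, $S/\rho\le P(\tau_0)\exp\!\big(2\int_{\tau_0}^\tau(-1+(\Sigma^a_b\Sigma^b_a)^{1/2})\,ds\big)$, combined with the global shear bound $\Sigma^a_b\Sigma^b_a\le C\varepsilon^2$, which yields the rate $-2+\epsilon$.

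The step I expect to be the genuine obstacle is not this assembly but the $X$-estimate that underlies it. In \eqref{X} the oscillatory coefficient $b=X^2-1-\Sigma_+$ is close to $-1$ and therefore not small, so a direct Gronwall estimate on $X$ would lose the decay. The resolution is the integrating-factor change of variable $\bar X=fX$ with $f=(1+\tfrac14 Mb\sin2\psi)^{-1}$, which cancels the leading oscillation at the price of producing the terms in \eqref{barx}; making this work requires the auxiliary estimates $|a+\tfrac12|,\,|d-\tfrac12|,\,|b'|\le C(\varepsilon+P(\tau_0))$ and the matter bound on $c=S_-\cos\psi$, and it is the faster-than-$e^{-2\tau/5}$ decay extracted from this manoeuvre on which the whole bootstrap closure ultimately rests.
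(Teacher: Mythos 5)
Your proposal is correct and follows essentially the same route as the paper's own proof: the identical bootstrap scheme resting on the matter bounds, the $M$-estimate \eqref{estimatem}, the oscillation-cancelling substitution $\bar X=fX$ behind \eqref{boot}, and the contradiction argument ruling out \eqref{op} to get \eqref{boot_Sigma_+}, with the $S/\rho$ rate read off from $S/\rho\le P$ and the shear bound. The only difference is presentational: you spell out the ordering of the smallness parameters and the open/closed continuity argument, which the paper compresses into the single sentence combining \eqref{estimatem}, \eqref{boot}, and \eqref{boot_Sigma_+} to conclude that the estimates hold globally.
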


\section{Convergence of the metric, optimal estimates and the Weyl parameter}

\subsection{Convergence of the metric}
Let us now come back to the evolution equation of the Hubble variable \eqref{H-1} in terms of the new variables:
\begin{align*}
\partial_t(H^{-1})=\frac{3}{2}+\frac32 \Sigma_+^2 +\frac12 X^2+X^2 \cos 2\psi+\frac{S}{6H^2}.
\end{align*}
From the results of the previous section we have
\begin{align}\label{hybrid}
\partial_t(H^{-1})=\frac{3}{2}+O(\epsilon e^{(-1+2\epsilon)\tau}).
\end{align}
We choose $t_0=\frac23H^{-1}(t_0)$ and integrate the above to obtain $H^{-1}=(3/2) t (1+O(\epsilon))$ at late times. Since $dt/d\tau = H^{-1}$, we have $t'=(3/2) t (1+O(\epsilon))$ and obtain $t/t_0=\exp((3/2) (1+O(\epsilon))(\tau-\tau_0))$,
which means that
\[
e^{\tau-\tau_0}=\bigg(\frac{t}{t_0}\bigg)^{\frac23 +O(\epsilon)}.
\]
If we use this in \eqref{hybrid}, then we obtain
\[
\partial_t(H^{-1})=\frac32 +O(\epsilon t^{-\frac32 +O(\epsilon)}).
\]
This implies that
\begin{align*}
H=\frac23 t^{-1}(1+O(\epsilon t^{-\frac23+O(\epsilon)})),
\end{align*}
which one can compare with (3.34) of \cite{WHU}.
With the estimates of $\Sigma_+$ and $\Sigma_-$ we obtain 
\begin{align*}
\Sigma_+^2 + \Sigma_-^2= O(\epsilon t^{-\frac23+O(\epsilon)}),
\end{align*}
which implies that
\begin{align*}
\sigma^a_b \sigma^b_a= O(t^{-\frac83+O(\epsilon)}).
\end{align*}
We now observe that $(\sigma_{cd}\sigma^{cd})^\frac12$ is integrable.

\begin{prop}
Consider any $C^{\infty}$ solution of the Einstein-Vlasov system with reflection and Bianchi VII$_0$ symmetries and with $C^{\infty}$ initial data. Assume that $P(\tau_0)$, ${\Sigma}_+(\tau_0)$, $X(\tau_0)$ and $M(\tau_0)$ are sufficiently small and $t_0=(2/3)H^{-1}(t_0)>0$. Then at late times there exists a small constant  $\epsilon>0$ and constant matrices $\mathcal{G}_{ab}$ and $\mathcal{G}^{ab}$ such that the following estimates hold:
\begin{align}\label{metric1}
g_{ab}(t)&= t^{+\frac43}\Big(\mathcal{G}_{ab}+O( t^{-\frac13+\epsilon})\Big),\\
g^{ab}(t)&= t^{-\frac43}\Big(\mathcal{G}^{ab}+O( t^{-\frac13+\epsilon}) \Big).\label{metric2}
\end{align}
\end{prop}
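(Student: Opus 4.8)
The plan is to integrate the evolution equation $\dot g_{ab}=2k_{ab}=2\sigma_{ab}+2Hg_{ab}$ from \eqref{a} after factoring out the expected power-law growth. Motivated by the asymptotics $H=\frac23 t^{-1}(1+O(\epsilon t^{-2/3+O(\epsilon)}))$ already established above, I would introduce the rescaled metric $\tilde g_{ab}=t^{-4/3}g_{ab}$ together with its inverse analogue $\tilde g^{ab}=t^{4/3}g^{ab}$, and show that each converges to a constant matrix. A direct computation gives
\begin{align*}
\dot{\tilde g}_{ab}=\Big(2H-\tfrac43 t^{-1}\Big)\tilde g_{ab}+2t^{-4/3}\sigma_{ab}.
\end{align*}
The purpose of the $t^{4/3}$ rescaling is precisely that the leading $\frac43 t^{-1}$ part of $2H$ cancels, leaving $2H-\frac43 t^{-1}=O(\epsilon t^{-5/3+O(\epsilon)})$, a coefficient that is integrable in $t$.

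Next I would control the source term $2t^{-4/3}\sigma_{ab}$. Since the frame components $\sigma_{ab}$ themselves grow, the right move is not to bound them directly but to rewrite $t^{-4/3}\sigma_{ab}=\tilde g_{ac}\sigma^c_b=\tilde g_{ac}H\Sigma^c_b$. The mixed shear satisfies $|\Sigma^c_b|\le(\Sigma^a_b\Sigma^b_a)^{1/2}=O(\epsilon^{1/2}t^{-1/3+O(\epsilon)})$ by the preceding estimates, and combined with $H=O(t^{-1})$ this yields, as long as $\tilde g$ stays bounded,
\begin{align*}
t^{-4/3}\sigma_{ab}=O\big(\epsilon^{1/2}t^{-4/3+O(\epsilon)}\big),
\end{align*}
which is exactly the integrability of $(\sigma_{cd}\sigma^{cd})^{1/2}$ flagged just before the statement. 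Thus $\dot{\tilde g}_{ab}$ is a sum of an integrable-coefficient linear term and an integrable source. A two-sided Gronwall estimate first shows that $\tilde g_{ab}$ stays bounded above and below, closing the bootstrap implicit in the source estimate; then integrability of $\dot{\tilde g}_{ab}$ shows that $\tilde g_{ab}(t)$ has a finite limit $\mathcal{G}_{ab}$. Estimating the tail,
\begin{align*}
\tilde g_{ab}(t)-\mathcal{G}_{ab}=-\int_t^\infty\dot{\tilde g}_{ab}(s)\,ds=O\Big(\int_t^\infty\big(s^{-5/3}+s^{-4/3}\big)s^{O(\epsilon)}\,ds\Big)=O(t^{-1/3+\epsilon}),
\end{align*}
where the slower $t^{-4/3}$ tail dictates the rate, giving \eqref{metric1}.

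Finally, for \eqref{metric2} I would run the identical argument on $\tilde g^{ab}$ starting from $\dot g^{ab}=-2k^{ab}=-2\sigma^{ab}-2Hg^{ab}$; the same cancellation and the same $O(t^{-4/3+\epsilon})$ source bound produce a limit $\mathcal{G}^{ab}$ with error $O(t^{-1/3+\epsilon})$. Passing to the limit in the identity $\tilde g^{ab}\tilde g_{bc}=g^{ab}g_{bc}=\delta^a_c$ identifies $\mathcal{G}^{ab}$ as the inverse of $\mathcal{G}_{ab}$, and in particular shows $\mathcal{G}_{ab}$ is non-degenerate. I expect the main obstacle to be the source term: one cannot estimate $\sigma_{ab}$ on its own, since its frame components do not decay, so the argument must keep the boundedness of $\tilde g_{ab}$ and the decay of the dimensionless shear $\Sigma^c_b$ coupled together through the Gronwall/bootstrap loop, and one must verify throughout that every error is integrable in $t$, which forces $\epsilon<\frac13$.
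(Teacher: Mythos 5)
Your proposal is correct and follows essentially the same route as the paper's proof: the same $t^{-4/3}$ rescaling, the same observation that the shear source must be absorbed as $(\sigma_{cd}\sigma^{cd})^{1/2}\Vert \tilde g_{ab}\Vert$ rather than bounded directly, the same Gronwall/integrability argument producing the limit $\mathcal{G}_{ab}$, and the same tail estimate giving the $t^{-1/3+\epsilon}$ rate. The only (harmless) differences are that you make explicit the identification of $\mathcal{G}^{ab}$ as the inverse of $\mathcal{G}_{ab}$ and the sign in $\dot{\tilde g}_{ab}$, which the paper leaves implicit or writes with a typographical sign slip.
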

\begin{proof}
Define $g_{ab}=t^{\frac43} \bar{g}_{ab}$.
From the definitions we have that
\begin{align*}
\dot{\bar{g}}_{ab}=2\left[\left(H-\frac23 t^{-1}\right)\bar{g}_{ab}-t^{-\frac43}\sigma_{ab}\right].
\end{align*}
Using the usual matrix norm we obtain that
\begin{align*}
\Vert \bar{g}_{ab}(t) \Vert \leq \Vert \bar{g}_{ab}(t_0) \Vert + C\int^t_{t_0}
\Big(\Big\vert H(s)- \frac23 s^{-1}\Big\vert +(\sigma_{cd}\sigma^{cd}(s))^{\frac{1}{2}}\Big) \Vert \bar{g}_{ab}(s) \Vert ds.
\end{align*}
With Gronwall's inequality we obtain
\begin{align*}
\Vert \bar{g}_{ab}(t) \Vert \leq \Vert \bar{g}_{ab}(t_0) \Vert \exp\bigg(
\int^t_{t_0} \Big\vert H(s)- \frac23 s^{-1}\Big\vert +(\sigma_{cd}\sigma^{cd}(s))^{\frac{1}{2}}ds\bigg)
\leq C\|\bar{g}_{ab}(t_0)\|,
\end{align*}
because what matters is that the integrand decays faster than $s^{-1}$.
Therefore, $\bar{g}_{ab}$ is bounded for all $t\geq t_0$, and we have
\begin{align*}
\vert t^{-\frac43}g_{ab} \vert \leq C.
\end{align*}
Moreover, we have $\|\sigma_{ab}\|\leq (\sigma_{cd}\sigma^{cd})^{\frac12}\|g_{ab}\|\leq Ct^{\epsilon}$ for some small constant $\epsilon>0$, and this is enough to conclude that there exists a constant matrix $\mathcal{G}_{ab}$ defined by
\[
\mathcal{G}_{ab}=\bar{g}_{ab}(t_0)+\int_{t_0}^\infty
2\Big(H(s)-\frac23 s^{-1}\Big)\bar{g}_{ab}(s)-2s^{-\frac{4}{3}}\sigma_{ab}(s)ds.
\]
Note that
\begin{align*}
|\mathcal{G}_{ab}-\bar{g}_{ab}(t)|
&\leq C \int_t^\infty s^{-\frac43+\epsilon}ds\leq C t^{-\frac13+\epsilon},
\end{align*}
and this proves the estimate \eqref{metric1}. The estimate \eqref{metric2} is obtained in a similar way, and this completes the proof.
\end{proof}

The result concerning the metric is the analogue of (C.3) of \cite{WHU} where one has to use (C.10) where $\beta$ is according  to (3.26) equal to $1/2$ in the dust case which corresponds to $\gamma=1$. 
We see thus that the metric becomes isotropic in the sense of the theorem. 

\subsection{Optimal estimates and the Weyl parameter}
With the estimates for the metric, we can get rid of the epsilons of the estimates obtained. For $P$ with the results obtained we get
\[
\frac{d}{d\tau}\Big[ e^{2\tau}P\Big]\leq 2e^{2\tau}(\Sigma^a_b\Sigma_a^b)^{\frac12}P\leq C\epsilon e^{(-\frac12+C\epsilon)\tau},
\]
for some small $\epsilon>0$. Integrating the above we obtain
\[
P(\tau)\leq C(\epsilon+P(\tau_0))e^{-2\tau}.
\]
Since $P(\tau_0)$ is assumed to be small, we can write $P=O(\epsilon e^{-2\tau})$. Similarly for $M$ we have
\begin{align*}
\frac{d}{d\tau}\Big[ e^{\frac12 \tau}M\Big]&=-e^{\frac12 \tau}M\bigg(\frac32 \Sigma_+^2 +\frac12 X^2 +\frac{S}{6H^2}+2\Sigma_+ +X^2(\cos 2\psi +3M \sin 2\psi)\bigg)\\
&\leq C\epsilon e^{(-\frac12 +C\epsilon)\tau},
\end{align*}
and this is enough to conclude that $M=O(\epsilon e^{-\frac12\tau})$. In a similar way we obtain optimal estimates for $X$ and $\Sigma_+$. 

As a consequence of the estimates, the Weyl parameter has the same late-time behaviour as in the case of a non-tilted perfect fluid, namely (3.39) of \cite{WHU}
\begin{align*}
\mathcal{W}= \frac{2X}{M} [1+O(M)].
\end{align*}
In particular we have shown that $X$ and $M$ have the same decay rate, so we obtain that the Weyl parameter tends to a constant and thus has the same behaviour as for dust. Let us summarise the results in the following theorem:
 \begin{thm}
Consider any $C^{\infty}$ solution of the Einstein-Vlasov system with reflection and Bianchi VII$_0$ symmetries and with $C^{\infty}$ initial data. Assume that $P(\tau_0)$, ${\Sigma}_+(\tau_0)$, $X(\tau_0)$ and $ M(\tau_0) $ are sufficiently small and $t_0=(2/3)H^{-1}(t_0)>0$. Then at late times there exists a small constant $\epsilon>0$ and constant matrices $\mathcal{G}_{ab}$ and $\mathcal{G}^{ab}$ such that the following estimates hold:
\begin{align*}
g_{ab}(t)&= t^{+\frac43}\Big(\mathcal{G}_{ab}+O( t^{-\frac13})\Big),\\
g^{ab}(t)&= t^{-\frac43}\Big(\mathcal{G}^{ab}+O( t^{-\frac13}) \Big).
\end{align*}
Moreover, we have
\begin{align*}
\Sigma_+&=O(e^{-\tau}),\\
X&=O(e^{-\frac12\tau}),\\
M&=O(e^{-\frac12\tau}),\\
\frac{S}{\rho}&=O(e^{-2\tau}),\\
H&=\frac23 t^{-1}(1+O(t^{-\frac23})),\\
\mathcal{W}&= C [1+O(e^{-\frac12\tau})].
\end{align*}
\end{thm}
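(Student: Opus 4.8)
The plan is to treat this theorem as the consolidation of the three preceding propositions together with a final removal of the $\epsilon$-losses in the exponents. The global-in-time validity of the rough estimates $\Sigma_+=O(\epsilon e^{(-1+\epsilon)\tau})$, $X=O(\epsilon e^{(-1/2+\epsilon)\tau})$, $M=O(\epsilon e^{(-1/2+\epsilon)\tau})$ and $S/\rho=O(\epsilon e^{(-2+\epsilon)\tau})$ is already secured by the closed bootstrap (estimates \eqref{estimatem}, \eqref{boot}, \eqref{boot_Sigma_+}), and the metric expansions \eqref{metric1}--\eqref{metric2} together with $H=\tfrac23 t^{-1}(1+O(\epsilon t^{-2/3+\epsilon}))$ are furnished by the metric Proposition. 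What remains is to sharpen these.

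The first and decisive observation is that the rough estimates make the shear \emph{integrable}: since $\Sigma^a_b\Sigma^b_a=6(\Sigma_+^2+\Sigma_-^2)$ and $\Sigma_-=X\cos\psi$, one has $(\Sigma^a_b\Sigma^b_a)^{1/2}=O(\epsilon e^{(-1/2+\epsilon)\tau})$, so $\int_{\tau_0}^\infty (\Sigma^a_b\Sigma^b_a)^{1/2}\,ds$ converges. Feeding this back into \eqref{PPP} removes the $\epsilon$ from $P$: the exponential factor is bounded by a constant and $P=O(\epsilon e^{-2\tau})$, whence by $S/\rho\le P$ and \eqref{SSS} also $S/\rho=O(\epsilon e^{-2\tau})$. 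I would then upgrade $M$ by rewriting \eqref{M} as an equation for $e^{\tau/2}M$, whose logarithmic derivative is the bracket in \eqref{uplow} shifted by $\tfrac12$; each term there is now $O(\epsilon e^{(-1/2+C\epsilon)\tau})$, so integrating gives $M=O(\epsilon e^{-\tau/2})$. The estimate for $X$ follows the same pattern through the auxiliary variable $\bar X=fX$ of \eqref{barx}: the coefficient $a$ equals $-\tfrac12$ plus an integrable remainder and the forcing $cf=S_-\cos\psi\, f=O(e^{-2\tau})$ is negligible against $e^{-\tau/2}$, so $X=O(\epsilon e^{-\tau/2})$.

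The sharp estimate for $\Sigma_+$ is where the real work lies and is the crux of the self-similarity breaking. Its homogeneous rate in \eqref{+} is $Q-2\to-\tfrac32$, which would suggest $e^{-3\tau/2}$, but the oscillatory forcing $-X^2+(1+\Sigma_+)X^2\cos2\psi+S_+$ is only $O(e^{-\tau})$ once the sharp $X=O(e^{-\tau/2})$ is known, and it is this slower forcing that fixes the actual rate. I would therefore treat \eqref{+} by variation of parameters: the integrating factor contributes the homogeneous $e^{-3\tau/2}$ piece, while Duhamel against the $O(e^{-\tau})$ source produces the dominant $O(e^{-\tau})$ contribution, giving $\Sigma_+=O(e^{-\tau})$ with no $\epsilon$. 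The main obstacle is controlling this forcing through its sign changes and the oscillations of $\cos2\psi$, exactly the difficulty already met in closing the bootstrap; the point to verify is that the genuinely non-integrable part of the source decays at the fixed rate $e^{-\tau}$, so that no resonance with the homogeneous rate occurs.

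Finally I would re-run the two remaining computations with the $\epsilon$-free rates in hand. Re-inserting $\Sigma_+^2+\Sigma_-^2=O(t^{-2/3})$ yields $(\sigma_{cd}\sigma^{cd})^{1/2}=O(t^{-4/3})$, and repeating the Gronwall argument of the metric Proposition upgrades the error in \eqref{metric1}--\eqref{metric2} from $O(t^{-1/3+\epsilon})$ to $O(t^{-1/3})$, while \eqref{H-1} integrates to $H=\tfrac23 t^{-1}(1+O(t^{-2/3}))$. For the Weyl parameter I would substitute the sharp estimates into the definitions of $\mathcal{E}_\pm$ and $\mathcal{H}_\pm$: the terms $\mathcal{E}_+$ and $\mathcal{H}_+$ are $O(X^2)$ and hence subdominant, whereas $\mathcal{E}_-^2+\mathcal{H}_-^2=\tfrac{4X^2}{M^2}(1+O(M))$ by $\sin^2\psi+\cos^2\psi=1$, so $\mathcal{W}=\tfrac{2X}{M}[1+O(M)]$. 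Since $X$ and $M$ share the decay rate $e^{-\tau/2}$, the ratio $X/M$ converges and $\mathcal{W}=C[1+O(e^{-\tau/2})]$, which is the stated dust-like behaviour.
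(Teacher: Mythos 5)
Your proposal is correct and follows essentially the same route as the paper: the paper likewise removes the $\epsilon$-losses by exploiting the integrability of the decaying coefficients (computing $\frac{d}{d\tau}[e^{2\tau}P]$ and $\frac{d}{d\tau}[e^{\tau/2}M]$, treating $X$ and $\Sigma_+$ ``in a similar way''), then re-runs the metric/Hubble arguments and expands $\mathcal{W}=\frac{2X}{M}[1+O(M)]$. Your Duhamel argument for $\Sigma_+$ and the observation that convergence of $X/M$ follows from the integrable log-derivatives of $e^{\tau/2}M$ and $e^{\tau/2}\bar{X}$ simply make explicit the steps the paper leaves as sketched.
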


\section{Conclusions and Outlook}

We have shown that under small data assumptions the space-time will not be self-similar at late times. In particular a dimensionless variable associated to the Weyl-curvature blows up. Nevertheless the shear tends to zero. This means that the methods which have been applied in \cite{EN} for solutions close to a self-similar solution, or what one might call an ``exact'' solution have also been applicable to a case where solutions are close to a non self-similar solution. The key to treat this case was to introduce the compactification of one of the variables as was done in \cite{WHU}. We have analysed the case of being close to dust. Using a perfect fluid for the dust case a Weyl-curvature bifurcation occurs and we have shown that for collisionless matter one obtains the limiting behaviour of the dust case. A natural generalisation is to consider the Boltzmann case along the methods of \cite{LN}. 
Near radiation another bifurcation occurs \cite{NHW}. We plan to analyse the massless Einstein-Vlasov case to see whether in the kinetic picture the conclusions remain the same. Here we have only treated the reflection symmetric case. It would thus be of interest to remove this restriction as was done in \cite{E4} concerning Bianchi II and VI$_0$. For this aim the variables introduced in \cite{HervikVII0} concerning a tilted perfect fluid will be useful and the results should be compared to that paper. 
\section*{Acknowledgements}
H. Lee has been supported by the TJ Park Science Fellowship of POSCO TJ Park Foundation. This research was supported by Basic Science Research Program through the National Research Foundation of Korea(NRF) funded by the Ministry of Science, ICT \& Future Planning (NRF-2015R1C1A1A01055216). E.N. is currently funded by a Juan de la Cierva research fellowship from the Spanish government and acknowledges financial support from the Spanish Ministry of Economy and Competitiveness, through the ''Severo Ochoa'' Programme for Centres of Excellence in R\& D (SEV-2015-0554). E.N. would also like to thank A.A. Coley, S. Hervik and C. Uggla for discussions about the subject of this paper.

\end{document}